\newtheorem{Definition}{Definition}
\newtheorem{Theorem}{Theorem}
\newtheorem{Lemma}[Theorem]{Lemma}
\newtheorem{fact}{Fact}
\title{Cover Time in Edge-Uniform Stochastically-Evolving Graphs\thanks{This work was supported by the University of Liverpool, EEE/CS School, NeST Initiative}}
\author{
  Ioannis Lamprou
  ,
  Russell Martin,
  Paul Spirakis
  \\
  \texttt{\{Ioannis.Lamprou, Russell.Martin, P.Spirakis\}@liverpool.ac.uk}
  \\ \\
  Department of Computer Science,
  \\
  University of Liverpool, UK
  }
\begin{document}

\maketitle
\thispagestyle{empty}

\begin{abstract}
	We define a general model of stochastically-evolving graphs, namely the \emph{Edge-Uniform Stochastically-Evolving Graphs}.
	In this model, each possible edge of an underlying general static graph evolves independently being either alive or dead at each discrete time step of evolution following a (Markovian) stochastic rule.
	The stochastic rule is identical for each possible edge and may depend on the past $k \ge 0$ observations of the edge's state.
	We examine two kinds of random walks for a single agent taking place in such a dynamic graph:
	(i) The \emph{Random Walk with a Delay} (\emph{RWD}), where at each step the agent chooses (uniformly at random) an incident possible edge, i.e., an incident edge in the underlying static graph, and then it waits till the edge becomes alive to traverse it.
	(ii) The more natural \emph{Random Walk on what is Available} (\emph{RWA}) where the agent only looks at alive incident edges at each time step and traverses one of them uniformly at random.	
	Our study is on bounding the \emph{cover time}, i.e., the expected time until each node is visited at least once by the agent. 
	For \emph{RWD}, we provide a first upper bound for the cases $k = 0, 1$ by correlating \emph{RWD} with a simple random walk on a static graph.
	Moreover, we present a modified electrical network theory capturing the $k = 0$ case.
	For \emph{RWA}, we derive some first bounds for the case $k = 0$,
	by reducing \emph{RWA} to an \emph{RWD}-equivalent walk with a modified delay.
	Further, we also provide a framework, which is shown to compute the exact value of the cover time for a general family of stochastically-evolving graphs in exponential time.
	Finally, we conduct experiments on the cover time of \emph{RWA} in Edge-Uniform graphs and compare the experimental findings with our theoretical bounds.
\end{abstract}


\section{Introduction}
In the modern era of Internet, modifications in a network topology can occur extremely frequently and in a disorderly way.
Communication links may fail from time to time, while connections amongst terminals may appear or disappear intermittently.
Thus, classical (static) network theory fails to capture such ever-changing processes.
In an attempt to fill this void, different research communities have given rise to a variety of theories on \emph{dynamic networks}.
In the context of algorithms and distributed computing, such networks are usually referred to as \emph{temporal graphs} \cite{Michail}.
A temporal graph is represented by a (possibly infinite) sequence of subgraphs of the same static graph.
That is, the graph is \emph{evolving} over a series of (discrete) time steps under a set of deterministic or stochastic rules of evolution.
Such a rule can be edge- or graph-specific and may take as input graph instances observed in previous time steps.

In this paper, we focus on stochastically-evolving temporal graphs.
We define a model of evolution, where there exists a single stochastic rule, which is applied \emph{independently} to each edge.
Furthermore, our model is general in the sense that the underlying static graph is allowed to be a general connected graph, i.e.,\ with no further constraints on its topology, and the stochastic rule can include any finite number of past observations. 

Assume now that a single mobile agent is placed on an arbitrary node of a temporal graph evolving under the aforementioned model.
Next, the agent performs a simple random walk; at each time step, after the graph instance is fixed according to the model, the agent chooses uniformly at random a node amongst the neighbors of its current node and visits it. The \emph{cover time} of such a walk is defined as the expected number of time steps until the agent has visited each node at least once.
Herein, we prove some first bounds on the cover time for a simple random walk as defined above, mostly via the use of Markovian theory.

Random walks constitute a very important primitive in terms of distributed computing. Examples include their use in information dissemination~\cite{Aleliunas} and random network structure~\cite{Bar-Ilan}; also, see the short survey in~\cite{Bui}.
In this work, we consider a single random walk as a fundamental building block for other more distributed scenarios to follow.

\subsection{Related Work}
A paper very relevant to ours is the one of Clementi, Macci, Monti, Pasquale and Silvestri~\cite{Clementi}, where they consider the flooding time in \emph{Edge-Markovian} dynamic graphs.
In such graphs, each edge independently follows a one-step Markovian rule and their model appears as a special case of ours (matches our case $k=1$).
Further work under this Edge-Markovian paradigm includes \cite{Baumann, Clementi2}. 

Another work related to our paper is the one of Avin, Kouck\'y and Lotker~\cite{Avin}, who define the notion of a \emph{Markovian Evolving Graph}, i.e.,\ a temporal graph evolving over a set of graphs $G_1, G_2,\ldots, $ where the process transits from $G_i$ to $G_j$ with probability $p_{ij}$, and consider random walk cover times. 
Note that their approach becomes computationally intractable if applied to our case; each of the possible edges evolves independently, thence causing the state space to be of size $2^m$, where $m$ is the number of possible edges in our model. 

Clementi, Monti, Pasquale and Silvestri \cite{Clementi3} study the broadcast problem, when at each time step the graph is selected according to the well-known $G_{n,p}$ model. 
Also, Yamauchi, Izumi and Kamei~\cite{Yamauchi} study the rendezvous problem for two agents on a ring, when each edge of the ring independently appears at every time step with some fixed probability $p$.

Moving to a more general scope, research in temporal networks is of interdisciplinary interest, since they are able to capture a wide variety of systems in physics, biology, social interactions and technology.
For a view of the big picture, see the review in \cite{Holme}.
There exist several papers considering, mostly continuous-time, random walks on different models of temporal networks:
In \cite{Starnini}, they consider a walker navigating randomly on some specific empirical networks.
Rocha and Masuda \cite{Rocha} study a lazy version of a random walk, where the walker remains to its current node according to some sojourn probability. 
In \cite{Figueiredo}, they study the behavior of a continuous time random walk on a stationary and ergodic time varying dynamic graph.
Lastly, random walks with arbitrary waiting times are studied in \cite{Delvenne}, while random walks on stochastic temporal networks are surveyed in \cite{Hoffmann}.

In the analysis to follow, we employ several seminal results around the theory of random walks and Markov chains.
For random walks, we base our analysis on the seminal work in \cite{Aleliunas} and the electrical network theory presented in \cite{Chandra, Doyle}. For results on Markov chains, we cite textbooks \cite{Habib, Norris}.

\subsection{Our Results}
We define a general model of stochastically-evolving graphs, where each possible edge evolves independently, but all of them evolve following the same stochastic rule.
Furthermore, the stochastic rule may take into account the last $k$ states of a given edge.
The motivation for such a model lies in several practical examples from networking where the existence of an edge in the recent past means it is likely to exist in the near future, e.g., for telephone or Internet links. 
In some other cases, existence may mean that an edge has "served its purpose" and is now unlikely to appear in the near future, e.g., due to a high maintenance cost. 
The model is a discrete-time one following previous work in the computer science literature.
Moreover, as a first start and for mathematical convenience, it is formalized as a synchronous system, where all possible edges evolve concurrently in distinct rounds (each round corresponding to a discrete time step).

Special cases of our model have appeared in previous literature, e.g., in \cite{Clementi3, Yamauchi} for $k=0$ and in the line of work starting from \cite{Clementi} for $k=1$, however they only consider special graph topologies (like ring and clique).
On the other hand, the model we define is general in the sense that no assumptions, aside from connectivity, are made on the topology of the underlying graph and any amount of history is allowed into the stochastic rule. 
Thence, we believe it can be valued as a basis for more general results to follow capturing search or communication tasks in such dynamic graphs.

We hereby provide the first known bounds relative to the cover time of a simple random walk taking place in such stochastically evolving graphs for $k = 0$.
To do so, we make use of a simple, yet fairly useful, modified random walk, namely the \emph{Random Walk with a Delay} (\emph{RWD}), where at each time step the agent is choosing uniformly at random from the incident edges of the static underlying graph and then waits for the chosen edge to become alive in order to traverse it.
Despite the fact that this strategy may not sound naturally-motivated enough, it can act as a handy tool when studying other, more natural, random walk models as in the case of this paper.
Indeed, we study the natural random walk on such graphs, namely the \emph{Random Walk on What's Available} (\emph{RWA}), where at each time step the agent only considers the currently alive incident edges and chooses to traverse one out of them uniformly at random.

For the case $k = 0$, that is, when each edge appears at each round with a fixed probability $p$ regardless of history, we prove that the cover time for \emph{RWD} is upper bounded by $C_G/p$, where $C_G$ is the cover time of a simple random walk on the (static) underlying graph $G$.
The result can be obtained both by a careful mapping of the \emph{RWD} walk to its corresponding simple random walk on the static graph and by generalizing the standard electrical network theory literature in \cite{Chandra, Doyle}. 
Later, we proceed to prove that the cover time for \emph{RWA} is between $C_G/(1-(1-p)^\Delta)$ and $C_G/(1-(1-p)^\delta)$ where $\delta$, respectively $\Delta$, is the minimum, respectively maximum, degree of the underlying graph.
The main idea here is to reduce \emph{RWA} to an \emph{RWD} walk, where at each step the traversal delay is lower, respectively upper, bounded by $(1-(1-p)^\delta)$, respectively $(1-(1-p)^\Delta)$.

For $k=1$, the stochastic rule takes into account the previous, one time step ago, state of the edge. If an edge was not present, then it becomes alive with probability $p$, whereas if it was alive, then it dies with probability $q$.
For \emph{RWD}, we show a $C_G/\xi_{min}$ upper bound by considering the minimum probability guarantee of existence at each round, i.e., $\xi_{min} = \min\{p, 1-q\}$.
Similarly, we show a $C_G/\xi_{max}$ lower bound, where $\xi_{max} = \max\{p, 1-q\}$.

Consequently, we demonstrate an exact, exponential-time approach to determine the precise cover time value for a general setting of stochastically-evolving graphs, including also the edge-independent model considered in this paper.

Finally, we conduct a series of experiments on calculating the cover time of RWA ($k = 0$ case) on various underlying graphs. We compare our experimental results with the achieved theoretical bounds.

\subsection{Outline}
In Section~\ref{sec:pre}, we provide preliminary definitions and results regarding important concepts and tools that we use in later sections. Then, in Section~\ref{sec:model}, we define our model of stochastically-evolving graphs in a more rigorous fashion. Afterwards, in Sections~\ref{sec:history0} and \ref{sec:history1}, we provide the analysis of our cover time bounds when for determining the current state of an edge we take into account its last $0$ and $1$ states, respectively. 
In Section~\ref{sec:exact}, we demonstrate an exact approach for determining the cover time for general stochastically-evolving graphs.
Then, in Section~\ref{sec:exp}, we present some experimental results on, zero-step history, \emph{RWA} cover time and compare them to the corresponding theoretical bounds in Section~\ref{sec:history0}.
Finally, in Section~\ref{sec:con}, we cite some concluding remarks. 

\section{Preliminaries}\label{sec:pre}
Let us hereby define a few standard notions related to a simple random walk performed by a single agent on a simple connected graph $G = (V,E)$.
By $d(v)$, we denote the degree, i.e.,\ the number of neighbors, of a node $v \in V$. 
A simple random walk is a Markov chain where, for $v, u \in V$, we set $p_{vu} = 1/d(v)$, if $(v,u)\in E$, and $p_{vu} = 0$, otherwise.
That is, an agent performing the walk chooses the next node to visit uniformly at random amongst the set of neighbors of its current node.
Given two nodes $v$, $u$, the expected time for a random walk starting from $v$ to arrive at $u$ is called the \emph{hitting time} from $v$ to $u$ and is denoted by $H_{vu}$.
The \emph{cover time} of a random walk is the expected time until the agent has visited each node of the graph at least once.
Let $P$ stand for the stochastic matrix describing the transition probabilities for a random walk (or, in general, a discrete-time Markov chain) where $p_{ij}$ denotes the probability of transition from node $i$ to node $j$, $p_{ij} \ge 0$ for all $i,j$ and $\sum_j p_{ij} = 1$ for all $i$. 
Then, the matrix $P^t$ consists of the transition probabilities to move from one node to another after $t$ time steps and we denote the corresponding entries as $p_{ij}^{(t)}$. 
Asymptotically, $\lim_{t \rightarrow \infty} P^t$ is referred to as the \emph{limiting distribution} of $P$.
A \emph{stationary distribution} for $P$ is a row vector $\pi$ such that $\pi P = \pi$ and $\sum_i \pi_i = 1$.
That is, $\pi$ is not altered after an application of $P$.
If every state can be reached from another in a finite number of steps, i.e.,\ $P$ is \emph{irreducible}, and the transition probabilities do not exhibit periodic behavior with respect to time, i.e.,\ $gcd\{t: p^{(t)}_{ij} > 0\} = 1$, then the stationary distribution is \emph{unique} and it matches the limiting distribution (\emph{Fundamental Theorem of Markov chains}).
The \emph{mixing time} is the expected number of time steps until a Markov chain approaches its stationary distribution.

In order to derive lower bounds for \emph{RWA}, we use the following graph family, commonly known as \emph{lollipop graphs}, capturing the maximum cover time for a simple random walk, e.g. see \cite{Brightwell, Feige}. 

\begin{Definition}
	A lollipop graph $L^k_n$ consists of a clique on $k$ nodes and a path on $n-k$ nodes connected with a cut-edge, i.e., an edge whose deletion makes the graph disconnected.
\end{Definition}

\section{The Edge-Uniform Evolution Model}\label{sec:model}
Let us define a general model of a dynamically evolving graph.
Let $G = (V, E)$ stand for a simple, \emph{connected} graph, from now on referred to as the \emph{underlying graph} of our model.
The number of nodes is given by $n = |V|$, while the number of edges is denoted by $m = |E|$.
For a node $v \in V$, let $N(v) = \{u: (v, u) \in E \}$ stand for the \emph{open neighborhood} of $v$ and $d(v) = |N(v)|$ for the \emph{(static) degree} of $v$. 
Note that we make no assumptions regarding the topology of $G$, besides connectedness.
We refer to the edges of $G$ as the \emph{possible edges} of our model.
We consider evolution over a sequence of discrete time steps (namely $0, 1, 2, \ldots$) and denote by $\mathcal{G} = (G_0, G_1, G_2, \ldots)$ the infinite sequence of graphs $G_t = (V_t, E_t)$, where $V_t = V$ and $E_t \subseteq E$.
That is, $G_t$ is the graph appearing at time step $t$ and each edge $e \in E$ is either \emph{alive} (if $e \in E_t$) or \emph{dead} (if $e \notin E_t$) at time step $t$.

Let $R$ stand for a \emph{stochastic rule} dictating the probability that a given possible edge is alive at any time step.
We apply $R$ at each time step and at each edge \emph{independently} to determine the set of currently alive edges, i.e.,\ the rule is \emph{uniform} with regard to the edges.
In other words, let $e_t$ stand for a random variable where $e_t = 1$, if $e$ is alive at time step $t$, or $e_t = 0$, otherwise.
Then, $R$ determines the value of $\Pr(e_t = 1 | H_t)$ where $H_t$ is also determined by $R$ and denotes the history length, i.e.,\ the values of $e_{t-1}, e_{t-2}, \ldots$, considered when deciding for the existence of an edge at time step $t$. For instance, $H_t = \emptyset$ means no history is taken into account, while $H_t = \{e_{t-1}\}$ means the previous state of $e$ is taken into account when deciding for its current state. 

Overall, the aforementioned \emph{Edge-Uniform Evolution} model (shortly \emph{EUE}) is defined by the parameters $G$, $R$ and some initial input instance $G_0$.  %
In the following sections, we consider some special cases for $R$ and provide some first bounds for the cover time of $G$ under this model.
Each time step of evolution consists of two stages: in the first stage, the graph $G_t$ is fixed for time step $t$ following $R$, while in the second stage, the agent moves to a node in $N_t[v] = \{v\} \cup \{u \in V: (v,u) \in E_t\}$.
Notice that, since $G$ is connected, then the cover time under \emph{EUE} is finite, since $R$ models edge-specific delays.

\section{Cover Time with Zero-Step History}\label{sec:history0}
We hereby analyze the cover time of $G$ under \emph{EUE} in the special case when no history is taken into consideration for computing the probability that a given edge is alive at the current time step. Intuitively, each edge appears with a fixed probability $p$ at every time step independently of the others.
More formally, for all $e \in E$ and time steps $t$, $\Pr(e_t = 1) = p \in [0,1]$.

\subsection{Random Walk with a Delay}

A first approach toward covering $G$ with a single agent is the following:
The agent is randomly walking $G$ as if all edges were present and, when an edge is not present, it just waits for it to appear in a following time step.
More formally, suppose the agent arrives on a node $v \in V$ with (static) degree $d(v)$ at the second stage of time step $t$. 
Then, after the graph is fixed for time step $t+1$, the agent selects a neighbor of $v$, say $u \in N(v)$, uniformly at random, i.e.,\ with probability $\frac{1}{d(v)}$.
If $(v,u) \in E_{t+1}$, then the agent moves to $u$ and repeats the above procedure.
Otherwise, it remains on $v$ until the first time step $t' > t+1$ such that $(v, u) \in E_{t'}$ and then moves to $u$. 
This way, $p$ acts as a \emph{delay} probability, since the agent follows the same random walk it would on a static graph, but with an expected delay of $\frac{1}{p}$ time steps at each node.
Notice that, in order for such a strategy to be feasible, each node must maintain knowledge about its neighbors in the underlying graph; not just the currently alive ones.
From now on, we refer to this strategy for the agent as the \emph{Random Walk with a Delay} (shortly \emph{RWD}).

Now, let us upper bound the cover time of \emph{RWD} by exploiting its strong correlation to a simple random walk on the underlying graph $G$ via Wald's Equation (Theorem~\ref{thm:Wald}).
Below, let $C_G$ stand for the cover time of a simple random walk on the static graph $G$.

\begin{Theorem}[\cite{Wald}]\label{thm:Wald}
	Let $X_1, X_2, \ldots, X_N$ be a sequence of real-valued, independent and identically distributed random variables where $N$ is a nonnegative integer random variable independent of the sequence (in other words, a stopping time for the sequence). If each $X_i$ and $N$ have finite expectations, then it holds
	$$E[X_1+X_2+\ldots+X_N] = E[N]\cdot E[X_1]$$
\end{Theorem}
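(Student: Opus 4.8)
The plan is to establish Wald's equation by the standard \emph{indicator-truncation} argument, which converts the sum of a random number of terms into an ordinary series whose expectation can be evaluated term by term. First I would rewrite the random-length sum as an infinite sum over all indices, using an indicator to switch off the terms beyond $N$:
$$X_1 + X_2 + \cdots + X_N \;=\; \sum_{i=1}^{\infty} X_i \, \mathbf{1}_{\{N \ge i\}},$$
an identity that holds pathwise since $\mathbf{1}_{\{N \ge i\}} = 0$ whenever $i > N$. Taking expectations, everything reduces to justifying the interchange of expectation and the infinite sum, namely
$$E\!\left[\sum_{i=1}^{\infty} X_i \, \mathbf{1}_{\{N \ge i\}}\right] \;=\; \sum_{i=1}^{\infty} E\!\left[ X_i \, \mathbf{1}_{\{N \ge i\}} \right].$$

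The key probabilistic step is then to observe that the event $\{N \ge i\}$ is the complement of $\{N \le i-1\}$, and under the hypothesis that $N$ is independent of the sequence (equivalently, a stopping time determined only by $X_1, \ldots, X_{i-1}$), the indicator $\mathbf{1}_{\{N \ge i\}}$ is independent of $X_i$. Hence $E[X_i \, \mathbf{1}_{\{N \ge i\}}] = E[X_i]\,\Pr[N \ge i] = E[X_1]\,\Pr[N \ge i]$, where the last equality uses that the $X_i$ are identically distributed. Substituting and pulling out the constant $E[X_1]$ yields $E[X_1] \sum_{i=1}^{\infty} \Pr[N \ge i]$, and I would finish with the elementary tail-sum identity $\sum_{i=1}^{\infty} \Pr[N \ge i] = E[N]$, valid for any nonnegative integer-valued $N$, giving exactly $E[N] \cdot E[X_1]$.

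The main obstacle is the interchange of expectation and infinite summation, which is not automatic when the $X_i$ can take negative values. To handle it rigorously I would first prove the identity for nonnegative summands via Tonelli's theorem (monotone convergence), where no integrability issue arises, and then extend to general $X_i$ by splitting $X_i = X_i^{+} - X_i^{-}$ and invoking the finiteness of $E[|X_1|]$ together with $E[N] < \infty$ to guarantee absolute convergence, at which point Fubini's theorem legitimizes the exchange. An alternative, cleaner route I would keep in reserve is the martingale argument: the centered partial sums $S_n = \sum_{i=1}^{n} (X_i - E[X_1])$ form a martingale, and applying the optional stopping theorem at the stopping time $N$ (whose hypotheses are met precisely because $E[N] < \infty$ and $E[|X_1|] < \infty$) gives $E[S_N] = 0$, which is Wald's identity after rearrangement.
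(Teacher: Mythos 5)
Your proposal is correct, but note that the paper itself does not prove this statement at all: Theorem~\ref{thm:Wald} is imported as a classical result, cited directly to Wald's 1947 monograph, and is used as a black box in the proof of Theorem~\ref{thm:RWDAleliunas}. So there is no in-paper argument to compare against; what you have written is the standard textbook proof (indicator truncation $X_1+\cdots+X_N=\sum_{i\ge 1} X_i\,\mathbf{1}_{\{N\ge i\}}$, factorization of $E[X_i\,\mathbf{1}_{\{N\ge i\}}]$ via independence, the tail-sum identity $\sum_{i\ge 1}\Pr[N\ge i]=E[N]$, and Tonelli/Fubini to justify the interchange), and it is sound. One small caution: both the paper's statement and your proof treat ``$N$ independent of the sequence'' and ``$N$ a stopping time for the sequence'' as synonymous (``equivalently''), which they are not --- a stopping time adapted to the natural filtration need not be independent of the sequence. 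Your argument is robust to this imprecision, since in either reading the event $\{N\ge i\}$ is independent of $X_i$ (in the first case because $N$ is independent of everything, in the second because $\{N\ge i\}=\{N\le i-1\}^{c}$ is measurable with respect to $\sigma(X_1,\ldots,X_{i-1})$), which is the only property the factorization step needs; it would be worth making that distinction explicit rather than calling the two hypotheses equivalent. The martingale/optional-stopping route you keep in reserve is also a valid alternative.
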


\begin{Theorem}\label{thm:RWDAleliunas}
	For any connected underlying graph $G$ evolving under the zero-step history \emph{EUE}, the cover time for \emph{RWD} is expectedly $C_G/p$.
\end{Theorem}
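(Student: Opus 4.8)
The plan is to exploit the fact that, if one ignores the waiting times, the sequence of nodes visited by \emph{RWD} is distributed exactly as a simple random walk on the static graph $G$. At each step the agent commits to a neighbor $u \in N(v)$ chosen uniformly at random among \emph{all} static neighbors of its current node $v$, with this choice being independent of which incident edges happen to be alive. Hence the induced node-trajectory $(v_0, v_1, v_2, \dots)$ is governed by the transition probabilities $p_{vu} = 1/d(v)$, i.e., it is precisely a simple random walk on $G$. Letting $N$ denote the (random) number of moves this walk needs before every node has been visited at least once, the definition of $C_G$ gives $E[N] = C_G$.

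Second, I would analyze the time cost of a single move. Once the agent commits to crossing $(v,u)$, it waits for the first time step at which this edge is alive. Since under the zero-step history model each edge is alive with probability $p$ independently at every time step, the duration $X_i$ of the $i$-th move is geometrically distributed with success probability $p$, so $E[X_i] = 1/p$. Crucially, because $k=0$ means edge states are independent across \emph{all} time steps (and across edges), successive move-durations rest on disjoint time intervals and are therefore mutually independent; moreover, since every possible edge shares the same alive-probability $p$, the law of $X_i$ does not depend on which edge is crossed, hence not on the trajectory. This yields that the $X_i$ are i.i.d.\ with mean $1/p$ and are independent of $N$.

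Finally, writing the total cover time as $T = X_1 + X_2 + \cdots + X_N$, the hypotheses of Wald's Equation (Theorem~\ref{thm:Wald}) are met: the $X_i$ are i.i.d.\ with finite mean, and $N$ is a stopping time for the sequence with finite expectation (finite since $G$ is connected, as observed in Section~\ref{sec:model}). Applying the theorem gives $E[T] = E[N]\cdot E[X_1] = C_G \cdot \tfrac{1}{p} = C_G/p$, as claimed.

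I expect the main obstacle to be rigorously justifying the two independence requirements of Wald's Equation, namely that the move-durations $\{X_i\}$ form an i.i.d.\ sequence and that the number of moves $N$ is independent of this sequence. The key observation making this work is the clean decoupling between the agent's routing decisions (uniform over static neighbors, oblivious to the live/dead status of edges) and the delays (determined solely by edge evolution), together with the temporal independence of edge states specific to the $k=0$ case --- an independence that would fail for $k \ge 1$ and so does not carry over verbatim to the later sections.
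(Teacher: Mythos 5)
Your proposal is correct and matches the paper's own argument: both decouple the node-trajectory of \emph{RWD} (which, by the coupling with a simple random walk on the static $G$, covers the graph in expectedly $C_G$ transitions) from the per-transition delays, which are i.i.d.\ geometric with mean $1/p$, and both conclude via Wald's Equation (Theorem~\ref{thm:Wald}). Your additional care in justifying the independence of the delays from the number of moves $N$ --- via the obliviousness of the routing choices to edge states and the temporal independence specific to $k=0$ --- is a welcome elaboration of the step the paper attributes to edge-uniformity, not a different method.
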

\begin{proof}
	Consider a simple random walk, shortly \emph{SRW}, and an \emph{RWD} (under the \emph{EUE} model) taking place on a given connected graph $G$.
	Given that \emph{RWD} decides on the next node to visit uniformly at random based on the underlying graph, that is, in exactly the same way \emph{SRW} does, we use a coupling argument to enforce \emph{RWD} and \emph{SRW} to follow the exact same trajectory, i.e., sequence of visited nodes.
	
	Then, let the trajectory end when each node in $G$ has been visited at least once and denote by $T$ the total number of node transitions made by the agent.
	Such a trajectory under \emph{SRW} will cover all nodes in expectedly $E[T] = C_G$ time steps.
	On the other hand, in the \emph{RWD} case, for each transition we have to take into account the delay experienced until the chosen edge becomes available.
	Let $D_i \ge 1$ be a random variable, where $1 \le i \le T$ stands for the actual delay corresponding to node transition $i$ in the trajectory. 
	Then, the expected number of time steps till the trajectory is realized is given by
	$E[D_1 + \ldots + D_T]$. 
	Since the random variables $D_i$ are independent and identically distributed by the edge-uniformity of our model, $T$ is a stopping time for them and all of them have finite expectations, then by Theorem~\ref{thm:Wald} we get
	$E[D_1 + \ldots + D_T] = E[T]\cdot E[D_1] = C_G \cdot 1/p$.
\end{proof}

For an explicit general bound on \emph{RWD}, it suffices to use $C_G \le 2m(n-1)$ proved in \cite{Aleliunas}.

\paragraph*{A Modified Electrical Network.}
Another way to analyze the above procedure is to make use of a modified version of the standard literature approach of electrical networks and random walks \cite{Chandra, Doyle}.
This point of view gives us expressions for the hitting time between any two nodes of the underlying graph.
That is, we hereby (in Lemmata~\ref{lem:H=phi}, \ref{lem:commute} and Theorem~\ref{thm:coverRWD}) provide a generalization of the results given in \cite{Chandra, Doyle} thus correlating the hitting and commute times of \emph{RWD} to an electrical network analog and reaching a conclusion for the cover time similar to the one of Theorem~\ref{thm:RWDAleliunas}.

In particular, given the underlying graph $G$, we design an electrical network, $N(G)$, with the same edges as $G$, but where each edge has a resistance of $r = \frac{1}{p}$ ohms.
Let $H_{u,v}$ stand for the hitting time from node $u$ to node $v$ in $G$, i.e.\ the expected number of time steps until the agent reaches $v$ after starting from $u$ and following \emph{RWD}.
Furthermore, let $\phi_{u,v}$ declare the electrical potential difference between nodes $u$ and $v$ in $N(G)$ when, for each $w \in V$, we inject $d(w)$ amperes of current into $w$ and withdraw $2m$ amperes of current from a single node $v$.
We now upper-bound the cover time of $G$ under \emph{RWD} by correlating $H_{u,v}$ to $\phi_{u,v}$.

\begin{Lemma}\label{lem:H=phi}
	For all $u,v \in V$, $H_{u,v} = \phi_{u,v}$ holds.
\end{Lemma}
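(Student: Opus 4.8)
The plan is to show that, for a fixed target node $v$, both $H_{u,v}$ and $\phi_{u,v}$ (viewed as functions of the source $u$) satisfy the same system of linear equations together with the same boundary condition, and then to invoke uniqueness of its solution. Since the underlying graph $G$ is connected, that system pins down a single function, forcing $H_{u,v} = \phi_{u,v}$ for every $u$.

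First I would derive the recurrence for the hitting time under \emph{RWD}. Starting from any $u \neq v$, the agent selects each neighbor $w \in N(u)$ with probability $1/d(u)$ and, by the edge-uniformity of the zero-step model, incurs an expected delay of $1/p$ time steps before traversing the chosen edge, independently of which $w$ was selected. Conditioning on this first choice therefore gives
$$H_{u,v} = \sum_{w \in N(u)} \frac{1}{d(u)}\left(\frac{1}{p} + H_{w,v}\right) = \frac{1}{p} + \frac{1}{d(u)}\sum_{w \in N(u)} H_{w,v},$$
while trivially $H_{v,v} = 0$.

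Next I would read off the matching relation on the electrical side. Grounding the target by setting $\phi_v = 0$, so that $\phi_{u,v} = \phi_u$, I apply Kirchhoff's current law at each node $u \neq v$: the $d(u)$ amperes injected at $u$ must leave through its incident edges, each of conductance $p = 1/r$. This yields $d(u) = p\sum_{w\in N(u)}(\phi_u - \phi_w)$, and dividing by $d(u)$ rearranges to $\phi_u = \frac{1}{p} + \frac{1}{d(u)}\sum_{w\in N(u)}\phi_w$, which is exactly the hitting-time recurrence with $\phi$ in place of $H$; the boundary value $\phi_{v,v}=0$ holds by construction. Global consistency of the sources is automatic, since the total injected current $\sum_{w\in V} d(w) = 2m$ equals the current withdrawn at $v$.

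The remaining step, and the only real subtlety, is uniqueness: the homogeneous version of this system (a function that is harmonic at every node off $v$ and vanishes at $v$) admits only the trivial solution, because on a connected graph a nonconstant harmonic function would attain an interior extremum, contradicting the averaging property; hence such a function is identically zero. Consequently the two functions, agreeing on the recurrence and at $v$, coincide everywhere, which is the claim. I expect the calibration of the resistance $r = 1/p$ (equivalently conductance $p$) to be the one place where the delay of the walk must be threaded carefully through the computation so that the two recurrences align exactly; beyond that, the argument follows the standard electrical-network reasoning of \cite{Chandra, Doyle}.
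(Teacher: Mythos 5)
Your proposal follows essentially the same route as the paper's proof: both derive the recurrence $\phi_{u,v} = \frac{1}{p} + \frac{1}{d(u)}\sum_{w\in N(u)}\phi_{w,v}$ from Kirchhoff's laws and Ohm's law with resistance $1/p$, match it against the first-step recurrence for $H_{u,v}$ with expected per-edge delay $1/p$, and conclude by uniqueness of the solution to the resulting linear system. In fact your write-up is slightly more careful than the paper's, which merely asserts that the two identical systems have a unique solution, whereas you make the boundary condition $H_{v,v}=\phi_{v,v}=0$ explicit and justify uniqueness via the maximum principle for harmonic functions on a connected graph.
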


\begin{proof}
	Let us denote by $C_{uw}$ the current flowing between two neighboring nodes $u$ and $w$.
	Then, $d(u) = \sum_{w \in N(u)} C_{uw}$ since at each node the total inward current must match the total outward current (Kirchhoff's first law).
	Moving forward, $C_{uw} = \phi_{uw}/r = \phi_{uw}/(1/p) = p\cdot\phi_{uw}$ by Ohm's law.
	Finally, $\phi_{uw} = \phi_{uv} - \phi_{wv}$ since the sum of electrical potential differences forming a path is equal to the total electrical potential difference of the path  (Kirchhoff's second law). 
	Overall, we can rewrite $d(u) = \sum_{w \in N(u)} p (\phi_{u,v} - \phi_{w,v}) = d(u) \cdot p \cdot \phi_{u,v} - p\sum_{w \in N(u)} \phi_{w,v}$.
	Rearranging gives 
	$$ \phi_{u,v} = \frac{1}{p} + \frac{1}{d(u)}\sum_{w \in N(u)} \phi_{w,v} .$$
	
	Regarding the hitting time from $u$ to $v$, we rewrite it based on the first step:
	$$  H_{u,v} = \frac{1}{p} + \frac{1}{d(u)}\sum_{w \in N(u)} H_{w,v} $$
	since the first addend represents the expected number of steps for the selected edge to appear due to \emph{RWD}, and the second addend stands for the expected time for the rest of the walk. 
	
	Wrapping it up, since both formulas above hold for each $u \in V \setminus \{v\}$, therefore inducing two identical linear systems of $n$ equations and $n$ variables, it follows that there exists a unique solution to both of them and $H_{u,v} = \phi_{u,v}$.
\end{proof}

In the lemma below, let $R_{u,v}$ stand for the \emph{effective resistance} between $u$ and $v$, i.e.,\ the electrical potential difference induced when flowing a current of one ampere from $u$ to $v$.

\begin{Lemma}\label{lem:commute}
	For all $u,v \in V$, $H_{u,v} + H_{v,u} = 2mR_{u,v}$ holds.
\end{Lemma}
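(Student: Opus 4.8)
The plan is to invoke the superposition principle for linear electrical networks, exactly as in the classical commute-time argument of \cite{Chandra}, using Lemma~\ref{lem:H=phi} to translate hitting times into potential differences in the modified network $N(G)$.

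First I would set up two current configurations on $N(G)$. Configuration A is the one from Lemma~\ref{lem:H=phi}: inject $d(w)$ amperes at every node $w$ and withdraw $2m$ amperes at $v$, so the induced potentials satisfy $\phi^A_u - \phi^A_v = H_{u,v}$. Configuration B reverses the roles: inject $2m$ amperes at $u$ and withdraw $d(w)$ amperes at every node $w$. I would argue that Configuration B is precisely the current-reversal of the configuration that injects $d(w)$ everywhere and withdraws $2m$ at $u$; since reversing all currents negates every potential difference, and that latter configuration gives $\phi_v - \phi_u = H_{v,u}$ by Lemma~\ref{lem:H=phi} with the roles of $u$ and $v$ interchanged, Configuration B yields $\phi^B_u - \phi^B_v = H_{v,u}$.

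Next I would superpose the two configurations. Because $N(G)$ is a linear resistive network, potentials add, so the combined configuration C satisfies $\phi^C_u - \phi^C_v = (\phi^A_u - \phi^A_v) + (\phi^B_u - \phi^B_v) = H_{u,v} + H_{v,u}$. Summing the current sources node by node: every intermediate node $w \neq u,v$ receives $+d(w)$ from A and $-d(w)$ from B for a net of zero; node $u$ receives $+d(u)$ from A and $+2m - d(u)$ from B for a net of $+2m$; and node $v$ receives $d(v) - 2m$ from A and $-d(v)$ from B for a net of $-2m$. Hence C is simply a current of $2m$ amperes entering at $u$ and leaving at $v$, with no other sources.

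Finally, by the definition of effective resistance, driving a current of $2m$ amperes from $u$ to $v$ through $N(G)$ produces a potential difference $\phi^C_u - \phi^C_v = 2m \cdot R_{u,v}$. Equating the two expressions for $\phi^C_u - \phi^C_v$ gives $H_{u,v} + H_{v,u} = 2m R_{u,v}$. I expect the only delicate point to be the current-reversal step fixing the potential difference in Configuration B while keeping the sign conventions consistent with the definition of $\phi_{u,v}$ (whose sink sits at the second index); the modified resistance $r = 1/p$ needs no special handling, as it is already absorbed into $H_{u,v}$ through Lemma~\ref{lem:H=phi} and into $R_{u,v}$ through the network $N(G)$, so the standard argument carries over verbatim.
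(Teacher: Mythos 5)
Your proposal is correct and follows essentially the same route as the paper: both use Lemma~\ref{lem:H=phi} to identify hitting times with potentials, take the current-reversal of the configuration sinking $2m$ amperes at $u$, superpose it with the configuration sinking $2m$ amperes at $v$, and read off $2mR_{u,v}$ from Ohm's law for the resulting $u$-to-$v$ flow of $2m$ amperes. The only difference is presentational: you verify the node-by-node cancellation of current sources explicitly, which the paper leaves implicit.
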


\begin{proof}
	Similarly to the definition of $\phi_{u,v}$ above, one can define $\phi_{v,u}$ as the electrical potential difference between $v$ and $u$ when $d(w)$ amperes of current are injected into each node $w$ and $2m$ of them are withdrawn from node $u$. Next, note that changing all currents' signs leads to a new network where for the electrical potential difference, namely $\phi'$, it holds $\phi'_{u,v} = \phi_{v,u}$. 
	We can now apply the Superposition Theorem (see Section 13.3 in \cite{Bird}) and linearly superpose the two networks implied from $\phi_{u,v}$ and $\phi'_{u,v}$ creating a new one where $2m$ amperes are injected into $u$, $2m$ amperes are withdrawn from $v$ and no current is injected or withdrawn at any other node.
	Let $\phi''_{u,v}$ stand for the electrical potential difference between $u$ and $v$ in this last network.
	By the superposition argument, we get $\phi''_{u,v} = \phi_{u,v} + \phi'_{u,v} = \phi_{u,v} + \phi_{v,u}$, while from Ohm's law we get $\phi''_{u,v} = 2m\cdot R_{u,v}$.
	The proof concludes by combining these two observations and applying Lemma~\ref{lem:H=phi}.
\end{proof}

\begin{Theorem}\label{thm:coverRWD}
	For any connected underlying graph $G$ evolving under the zero-step history \emph{EUE}, the cover time for \emph{RWD} is at most $2m(n-1)/p$.
\end{Theorem}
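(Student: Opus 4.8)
The plan is to combine Lemma~\ref{lem:commute} with a spanning-tree traversal argument, in the spirit of the classical electrical-network cover-time bound of \cite{Chandra}. The whole proof rests on controlling the quantity $H_{u,v}+H_{v,u}$ across a single edge and then stitching these together along a tree.

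First I would bound the effective resistance across an individual edge. For any $(u,v)\in E$, the effective resistance $R_{u,v}$ in $N(G)$ is at most the resistance $r = 1/p$ of the single edge joining $u$ and $v$: every other $u$--$v$ path in $N(G)$ acts as a parallel connection, which can only decrease the effective resistance (Rayleigh monotonicity). Plugging this into Lemma~\ref{lem:commute} gives, for every edge, $H_{u,v}+H_{v,u} = 2m R_{u,v} \le 2m/p$.

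Next I would fix an arbitrary spanning tree $T$ of $G$ and consider the closed walk $v_0, v_1, \ldots, v_{2(n-1)} = v_0$ produced by a depth-first (Euler) tour of $T$. This tour passes through every vertex of $G$ and traverses each of the $n-1$ tree edges exactly twice, once in each direction. Starting the agent at $v_0$, the expected time to visit the vertices in this prescribed order is an upper bound on the cover time from $v_0$, since all vertices have been seen by the time the tour completes; and by linearity of expectation this expected time equals $\sum_{i=0}^{2(n-1)-1} H_{v_i,v_{i+1}}$. Regrouping the sum by tree edges, each edge $(u,v)\in T$ contributes precisely the pair of terms $H_{u,v}+H_{v,u}\le 2m/p$, so with $n-1$ tree edges the total is at most $(n-1)\cdot 2m/p = 2m(n-1)/p$. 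Since the starting vertex was arbitrary, this bounds the cover time under \emph{RWD}.

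The main obstacle is justifying the two inequalities cleanly rather than by brute computation: the edge-wise bound $R_{u,v}\le 1/p$ needs Rayleigh monotonicity (all competing paths only lower the resistance below that of the direct edge), and the passage from cover time to the summed hitting times along the Euler tour needs the observation that following a fixed tour that meets every vertex can only take longer than covering. Everything else is bookkeeping, and the factor $2m(n-1)/p$ mirrors the $C_G \le 2m(n-1)$ bound underlying Theorem~\ref{thm:RWDAleliunas}.
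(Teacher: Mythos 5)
Your proof is correct and follows essentially the same route as the paper's: fix a spanning tree, bound the cover time by the expected time of an Euler tour of the tree, and bound that by summing the commute times $H_{u,v}+H_{v,u}=2mR_{u,v}$ over the $n-1$ tree edges via Lemma~\ref{lem:commute}. If anything, your handling of the resistances is more careful than the paper's, which writes $R_{u,v}=\frac{1}{p}$ for tree edges where the correct statement is the inequality $R_{u,v}\le \frac{1}{p}$ that you justify via Rayleigh monotonicity (parallel paths can only lower the effective resistance), and the bound goes through in the right direction either way.
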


\begin{proof}
	Consider a spanning tree $T$ of $G$.
	An agent, starting from any node, can visit all nodes by performing an Eulerian tour on the edges of  $T$ (crossing each edge twice).
	This is a feasible way to cover $G$ and thus the expected time for an agent to finish the above task provides an upper bound on the cover time.
	The expected time to cover each edge twice is given by $\sum_{(u,v) \in E_T} (H_{u,v} + H_{v,u})$ where $E_T$ is the edge-set of $T$ with $|E_T| = n-1$.
	By Lemma~\ref{lem:commute}, this is equal to $2m\sum_{(u,v) \in E_T} R_{u,v} = 2m\sum_{(u,v) \in E_T} \frac{1}{p} = 2m(n-1)/p$.
\end{proof}

\subsection{Random Walk on what's Available}
Random Walk with a Delay does provide a nice connection to electrical network theory.
However, depending on $p$, there could be long periods of time where the agent is simply standing still on the same node.
Since the walk is random anyway, waiting for an edge to appear may not sound very wise. 
Hence, we now analyze the strategy of a \emph{Random Walk on what's Available} (shortly \emph{RWA}).
That is, suppose the agent has just arrived at a node $v$ after the second stage at time step $t$ and then $E_{t+1}$ is fixed after the first stage at time step $t+1$.
Now, the agent picks uniformly at random only amongst the alive incident edges at time step $t+1$.
Let $d_{t+1}(v)$ stand for the degree of node $v$ in $G_{t+1}$. 
If $d_{t+1}(v) = 0$, then the agent does not move at time step $t+1$.
Otherwise, if $d_{t+1}(v) > 0$, the agent selects an alive incident edge each having probability $\frac{1}{d_{t+1}(v)}$. The agent then follows the selected edge to complete the second stage of time step $t+1$ and repeats the strategy.
In a nutshell, the agent keeps moving randomly on available edges and only remains on the same node if no edge is alive at the current time step.
Below, let $\delta = \min_{v \in V} d(v)$ and $\Delta = \max_{v \in V} d(v)$.

\begin{Theorem}\label{thm:RWA}
	For any connected underlying graph $G$ with min-degree $\delta$ and max-degree $\Delta$ evolving under the zero-step history \emph{EUE}, the cover time for \emph{RWA} is at least $C_G/(1 - (1-p)^\Delta)$ and at most $C_G/(1 - (1-p)^\delta)$.
\end{Theorem}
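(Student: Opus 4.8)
The plan is to reduce \emph{RWA} to a delayed random walk whose embedded chain is a simple random walk on the static graph $G$, mirroring the structure used for \emph{RWD} in Theorem~\ref{thm:RWDAleliunas}, but now with a node-dependent delay governed by the degree.

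First I would analyze a single step of \emph{RWA} from a node $v$ of degree $d(v)$. At each time step every incident edge is alive independently with probability $p$, so the agent remains at $v$ until the first time step at which at least one incident edge is alive; this waiting time is geometric with success probability $1 - (1-p)^{d(v)}$, hence has expectation $1/(1 - (1-p)^{d(v)})$. The key observation is that, conditioned on at least one incident edge being alive, the next node is uniform over $N(v)$: since all edges share the same appearance probability $p$ and evolve independently, the distribution of the alive-edge set is invariant under permutations of the neighbors of $v$, so by symmetry each neighbor is selected with probability exactly $1/d(v)$, which is precisely the transition probability of a simple random walk on $G$. Moreover, the waiting time depends only on $d(v)$ and is independent of which neighbor is eventually chosen, since the alive configuration at the first successful step is independent of the number of unsuccessful steps preceding it. Consequently the sequence of distinct positions visited by \emph{RWA} is distributed exactly as a simple random walk on $G$, and the number $T$ of transitions needed to cover all nodes satisfies $E[T] = C_G$.

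Next I would assemble the cover time. Writing $D_i$ for the delay incurred at the node from which the $i$-th transition is made, the total cover time of \emph{RWA} is $\sum_{i=1}^{T} D_i$. Because the $D_i$ are not identically distributed (their means depend on the node's degree), I cannot invoke Wald's Equation directly as in Theorem~\ref{thm:RWDAleliunas}; instead I would condition on the entire trajectory. Given the trajectory, the conditional expectation equals $\sum_{i=1}^{T} 1/(1 - (1-p)^{d(v_{i-1})})$, and since $(1-p)^x$ is decreasing in $x$ and $\delta \le d(v_{i-1}) \le \Delta$, each summand lies between $1/(1 - (1-p)^{\Delta})$ and $1/(1 - (1-p)^{\delta})$. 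Bounding the sum termwise by $T$ copies of these extremes and then taking expectation over the trajectory, using $E[T] = C_G$, yields the lower bound $C_G/(1 - (1-p)^{\Delta})$ and the upper bound $C_G/(1 - (1-p)^{\delta})$.

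The main obstacle is the symmetry argument establishing that \emph{RWA} transitions coincide with those of a simple random walk: one must verify that restricting the uniform choice to the \emph{random} set of alive edges does not bias the agent toward higher-degree regions, which is exactly what the permutation-invariance of the edge process guarantees. The secondary technical point is recognizing that the node-dependent delays preclude a direct application of Wald and instead require the conditioning-on-trajectory argument, after which the degree bounds $\delta \le d(v) \le \Delta$ cleanly separate the two extremes.
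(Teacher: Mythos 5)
Your proof is correct and takes essentially the same route as the paper: the identical symmetry/total-probability computation showing that an \emph{RWA} step from $v$ moves to a uniformly random neighbor with per-step success probability $1-(1-p)^{d(v)}$, followed by sandwiching this node-dependent delay between the $\delta$- and $\Delta$-extremes and concluding via the simple-random-walk cover time $C_G$ as in Theorem~\ref{thm:RWDAleliunas}. Your trajectory-conditioning step is merely a more careful rendering of the paper's final move (which substitutes the uniform bounds $p'=1-(1-p)^\delta$ and $p''=1-(1-p)^\Delta$ directly into Theorem~\ref{thm:RWDAleliunas}), and it correctly handles the non-identically-distributed delays that the paper glosses over.
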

\begin{proof}
	Suppose the agent follows \emph{RWA} and has reached node $u \in V$ after time step $t$.
	Then, $G_{t+1}$ becomes fixed and the agent selects uniformly at random a neighboring edge to move to.
	Let $M_{uv}$ (where $v \in \{w \in V: (u,w) \in E\}$) stand for a random variable taking value $1$  if the agent moves to node $v$ and $0$ otherwise.
	For $k = 1, 2, \ldots, d(u) = d$, let $A_k$ stand for the event that $d_{t+1}(u) = k$.
	Therefore, $\Pr(A_k) = \binom{d}{k} p^k(1-p)^{d-k}$ is exactly the probability $k$ out of the $d$ edges exist since each edge exists independently with probability $p$.
	Now, let us consider the probability $\Pr(M_{uv} = 1| A_k)$: the probability $v$ will be reached given that $k$ neighbors are present.
	This is exactly the product of the probability that $v$ is indeed in the chosen $k$-tuple (say $p_1$) and the probability that then $v$ is chosen uniformly at random (say $p_2$) from the $k$-tuple.
	$p_1 = \binom{d-1}{k-1}/\binom{d}{k} = \frac{k}{d}$ since the model is edge-uniform and we can fix $v$ and choose any of the $\binom{d-1}{k-1}$ $k$-tuples with $v$ in them out of the $\binom{d}{k}$ total ones.
	On the other hand, $p_2 = \frac{1}{k}$ by uniformity. Overall, we get $\Pr(M_{uv} = 1| A_k) = p_1 \cdot p_2 = \frac{1}{d}$.
	We can now apply the total probability law to calculate 
	\begin{equation*}
		\resizebox{0.94\linewidth}{!}{
			$	
			\Pr(M_{uv} = 1) = \sum_{k=1}^d \Pr(M_{uv} = 1| A_k)\Pr(A_k) = \frac{1}{d} \sum_{k=1}^d \binom{d}{k} p^k(1-p)^{d-k} = \frac{1}{d}(1 - (1-p)^d)
			$
		}
	\end{equation*}
	To conclude, let us reduce \emph{RWA} to \emph{RWD}. 
	Indeed, in \emph{RWD} the equivalent transition probability is $\Pr(M_{uv} = 1)  = \frac{1}{d}p$, accounting both for the uniform choice and the delay $p$.
	Therefore, the \emph{RWA} probability can be viewed as $\frac{1}{d}p'$ where $p' = (1 - (1-p)^d)$.
	To achieve edge-uniformity we set $p' =  (1 - (1-p)^\delta)$ which lower bounds the delay of each edge and finally we can apply the same \emph{RWD} analysis by substituting $p$ by $p'$. 
	Similarly, we can set the upper-bound delay $p'' = (1 - (1-p)^\Delta)$ to lower-bound the cover time.
	Applying Theorem~\ref{thm:RWDAleliunas} completes the proof.
\end{proof}

The value of $\delta$ used to lower-bound the transition probability may be a harsh estimate for general graphs.
However, it becomes quite more accurate in the special case of a $d$-regular underlying graph where $\delta = \Delta = d$.
To conclude this section, we provide a worst-case lower bound on the cover time based on similar techniques as above.

\begin{Lemma}\label{lem:RWA0-lb}
	There exists an underlying graph $G$ evolving under the zero-step history \emph{EUE} such that the \emph{RWA} cover time is at least $\Omega(mn/(1-(1-p)^\Delta))$.
\end{Lemma}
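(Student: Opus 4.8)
The plan is to exhibit a single extremal underlying graph and then feed it into the lower-bound half of Theorem~\ref{thm:RWA}. The natural candidate is the lollipop family introduced in the preliminaries, since these graphs are precisely the ones known to maximize the cover time of a simple random walk. Concretely, I would take $G = L^{n/2}_n$, a clique on $n/2$ nodes joined by a single cut-edge to a path on $n/2$ nodes, and track three static quantities: the number of edges $m$, the maximum degree $\Delta$, and the static cover time $C_G$.

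First I would do the edge count. The clique contributes $\binom{n/2}{2} = \Theta(n^2)$ edges and the path only $\Theta(n)$, so $m = \Theta(n^2)$ and hence $mn = \Theta(n^3)$; the maximum degree $\Delta = \Theta(n)$ is attained inside the clique. Second, I would import from the cited extremal analysis~\cite{Brightwell, Feige} the fact that the simple random walk on this lollipop has $C_G = \Theta(n^3)$: the walk pays roughly $\Theta(n^2)$ steps for each attempt to escape the clique onto the path through the lone cut-edge (it takes $\Theta(n)$ steps to reach the junction and succeeds with probability only $\Theta(1/n)$), and $\Theta(n)$ net successful attempts are needed before a gambler's-ruin excursion actually reaches the far endpoint. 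Combining the two estimates gives $C_G = \Theta(n^3) = \Omega(mn)$ for this particular graph.

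Finally I would invoke the lower-bound assertion of Theorem~\ref{thm:RWA}, namely that the \emph{RWA} cover time on any underlying graph is at least $C_G/(1-(1-p)^\Delta)$. Substituting $C_G = \Omega(mn)$ for the lollipop immediately yields an \emph{RWA} cover time of at least $\Omega\!\left(mn/(1-(1-p)^\Delta)\right)$, which is exactly the claimed bound, and the existential statement follows since we have produced one such $G$.

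The individual steps are routine arithmetic and a direct appeal to an earlier theorem; the one place demanding care is the clique-to-path proportion, which must be a balanced constant fraction (here $1:1$) so that $m = \Theta(n^2)$ and $C_G = \Theta(n^3)$ hold \emph{simultaneously} — shrinking the clique destroys the $\Theta(n^2)$ edge count, while shrinking the path (enlarging the clique toward $n$ nodes) collapses the $\Theta(n^3)$ cover time, and neither extreme certifies $C_G = \Omega(mn)$. Thus the only genuinely nontrivial ingredient, the $\Theta(n^3)$ cover-time estimate for lollipops, is taken wholesale from the literature rather than reproved here, and the remainder of the argument is bookkeeping around Theorem~\ref{thm:RWA}.
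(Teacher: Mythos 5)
Your proof is correct and follows essentially the same route as the paper's: exhibit a lollipop graph whose static cover time is $\Omega(mn)$ (citing \cite{Brightwell, Feige}) and apply the lower-bound half of Theorem~\ref{thm:RWA}. The only difference is cosmetic --- the paper uses $L^{2n/3}_n$ rather than your $L^{n/2}_n$, and either constant-fraction split works since both give $m = \Theta(n^2)$ and $C_G = \Theta(n^3)$ simultaneously.
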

\begin{proof}
	We consider the $L^{2n/3}_n$ lollipop graph which is known to attain a cover time of $\Omega(mn)$ for a simple random walk \cite{Brightwell, Feige}. 
	Applying the lower bound from Theorem~\ref{thm:RWA} completes the proof.
\end{proof}	

\section{Cover Time with One-Step History}\label{sec:history1}
We now turn our attention to the case where the current state of an edge affects its next state.
That is, we take into account a history of length one when computing the probability of existence for each edge independently.
A Markovian model for this case was introduced in \cite{Clementi}; see Table~\ref{tab:pq}.
The left side of the table accounts for the current state of an edge, while the top for the next one.
The respective table box provides us with the probability of transition from one state to the other.
Intuitively, another way to refer to this model is as the \emph{Birth-Death} model: a dead edge becomes alive with probability $p$, while an alive edge dies with probability $q$.

\begin{table}[H]
	\caption{Birth-Death chain for a single edge \cite{Clementi}}
	\label{tab:pq}
	\centering
	\begin{tabular}{|c || c | c|}
		\hline
		& \emph{dead} & \emph{alive} \\
		\hline
		\hline
		\emph{dead}  & $1-p$ & $p$ \\
		\hline
		\emph{alive} & $q$ & $1-q$ \\
		\hline
	\end{tabular}
\end{table}

Let us now consider an underlying graph $G$ evolving under the \emph{EUE} model where each possible edge independently follows the aforementioned stochastic rule of evolution.

\subsection{RWD for General $(p,q)$-Graphs}

Let us hereby derive some first bounds for the cover time of \emph{RWD} via a min-max approach. The idea here is to make use of the "being alive" probabilities to prove lower and upper bounds for the cover time parameterized by $\xi_{min} = \min\{p, 1-q\}$ and $\xi_{max} = \max\{p, 1-q\}$. 

Let us consider an \emph{RWD} walk on a general connected graph $G$ evolving under \emph{EUE} with a zero-step history rule dictating $\Pr(e_t = 1) = \xi_{min}$ for any edge $e$ and time step $t$. We refer to this walk as the \emph{Upper Walk with a Delay}, shortly \emph{UWD}.
Respectively, we consider an \emph{RWD} walk when the stochastic rule of evolution is given by $\Pr(e_t = 1) = \xi_{max}$. We refer to this specific walk as the \emph{Lower Walk with a Delay}, shortly \emph{LWD}. 
Below, we make use of \emph{UWD} and \emph{LWD} in order to bound the cover time of \emph{RWD} in general $(p,q)$-graphs.

\begin{Theorem}\label{thm:RWDxi}
	For any connected underlying graph $G$ and the Birth-Death rule, the cover time of \emph{RWD} is at least $C_G/\xi_{max}$ and at most $C_G/\xi_{min}$.
\end{Theorem}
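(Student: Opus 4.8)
The plan is to sandwich the per-step traversal delay of \emph{RWD} between the fixed delays of \emph{LWD} and \emph{UWD}, and then invoke Theorem~\ref{thm:RWDAleliunas} to turn these delay bounds into cover-time bounds. The starting observation is that, under the Birth-Death rule, the probability that a selected possible edge is alive at a given time step is fully determined by its state one step earlier: it equals $p$ if the edge was dead and $1-q$ if it was alive. Consequently, conditioned on \emph{any} history of the edge, the probability it is alive at the current step lies in $[\xi_{min}, \xi_{max}]$, with $\xi_{min} = \min\{p,1-q\}$ and $\xi_{max} = \max\{p,1-q\}$. This is exactly what allows the min-max estimate to go through without ever computing the history-dependent exact delay distribution.

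First I would couple the three walks so that they traverse an identical trajectory. Since in each of \emph{RWD}, \emph{UWD} and \emph{LWD} the next node is chosen uniformly at random among the neighbors in the \emph{underlying} static graph $G$ — a choice independent of which edges are currently alive — a coupling on the uniform selections forces the same sequence of visited nodes in all three walks, just as in the proof of Theorem~\ref{thm:RWDAleliunas}. Let $T$ be the common number of node transitions needed to cover $G$ along this trajectory, so that $E[T] = C_G$.

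Next I would compare the delays transition by transition. Fix the $i$-th transition and let $D_i$ be the number of steps \emph{RWD} waits for the selected edge to appear, and let $D_i^U$, $D_i^L$ be the corresponding delays under \emph{UWD} and \emph{LWD}. Under \emph{UWD} the edge appears independently with fixed probability $\xi_{min}$ at each step, so $D_i^U$ is geometric with parameter $\xi_{min}$; likewise $D_i^L$ is geometric with parameter $\xi_{max}$. Because, conditioned on everything observed so far, the true \emph{RWD} edge appears at each waiting step with probability at least $\xi_{min}$ and at most $\xi_{max}$, one can realize all three delays on a common probability space so that $D_i^L \le D_i \le D_i^U$ holds pointwise (e.g.\ via the usual inverse-transform coupling of the waiting times). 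Summing over the $T$ transitions yields $\sum_{i=1}^T D_i^L \le \sum_{i=1}^T D_i \le \sum_{i=1}^T D_i^U$, and taking expectations places the \emph{RWD} cover time between those of \emph{LWD} and \emph{UWD}. Since \emph{UWD} and \emph{LWD} are zero-step-history \emph{RWD} walks with alive-probabilities $\xi_{min}$ and $\xi_{max}$, Theorem~\ref{thm:RWDAleliunas} evaluates their cover times as $C_G/\xi_{min}$ and $C_G/\xi_{max}$, giving both claimed bounds.

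The main obstacle is that, unlike in Theorem~\ref{thm:RWDAleliunas}, the delays $D_i$ are \emph{not} i.i.d.: the Markovian dependence of each edge across time correlates the waiting time of a transition with the edge's state inherited from previous visits, so Wald's Equation cannot be applied to \emph{RWD} directly. The resolution is that the sandwiching $\xi_{min}\le \Pr(e_t=1\mid\text{history})\le\xi_{max}$ holds conditionally on the entire past, which is precisely the hypothesis needed to build the step-by-step stochastic-domination coupling against the i.i.d.\ geometric delays of \emph{UWD} and \emph{LWD}. Once that coupling is in place, all correlations are absorbed into the dominating and dominated walks, whose cover times are known exactly.
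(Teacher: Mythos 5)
Your proof is correct and follows essentially the same route as the paper: both arguments sandwich \emph{RWD} between the zero-step-history walks \emph{LWD} and \emph{UWD} along a coupled common trajectory and then invoke Theorem~\ref{thm:RWDAleliunas} to evaluate their cover times as $C_G/\xi_{max}$ and $C_G/\xi_{min}$. If anything, your explicit step-by-step stochastic-domination coupling is a more careful justification of the per-transition delay comparison than the paper's brief expected-delay argument, and you correctly flag the non-i.i.d.\ nature of the \emph{RWD} delays, which is exactly why Wald's Equation is applied only to the bounding walks.
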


\begin{proof}
	Regarding \emph{UWD}, one can design a corresponding electrical network where each edge has a resistance of $1/\xi_{min}$ capturing the expected delay till any possible edge becomes alive.
	Applying Theorem~\ref{thm:RWDAleliunas}, gives an upper bound for the \emph{UWD} cover time.
	
	Let $C'$ stand for the \emph{UWD} cover time and $C$ stand for the cover time of \emph{RWD} under the Birth-Death rule.
	It now suffices to show $C \le C'$ to conclude.
	
	In Birth-Death, the expected delay before each edge traversal is either $1/p$, in case the possible edge is dead, or $1/(1-q)$, in case the possible edge is alive. 
	In both cases, the expected delay is upper-bounded by the $1/\xi_{min}$ delay of \emph{UWD} and therefore $C \le C'$ follows since any trajectory under \emph{RWD} will take at most as much time as the same trajectory under \emph{UWD}.
	
	In a similar manner, the cover time of \emph{LWD} lower bounds the cover time of \emph{RWD} and, by applying Theorem~\ref{thm:RWDAleliunas}, we derive a lower bound of $C_G/\xi_{max}$.
\end{proof}

\section{An Exact Approach}\label{sec:exact}
So far, we have established upper and lower bounds for the cover time of edge-uniform stochastically-evolving graphs.
Our bounds are based on combining extended results from simple random walk theory and careful delay estimations.
In this section, we describe an approach to determine the \emph{exact} cover time for temporal graphs evolving under \emph{any} stochastic model.
Then, we apply this approach to the already seen zero-step history and one-step history cases of \emph{RWA}.

The key component of our approach is a Markov chain capturing both phases of evolution: the graph dynamics and the walk trajectory. 
In that case, calculating the cover time reduces to calculating the hitting time to a particular subset of Markov states. 
Although computationally intractable for large graphs, such an approach provides the exact cover time value and is hence practical for smaller graphs.

Suppose we are given an underlying graph $G = (V,E)$ and a set of stochastic rules $R$ capturing the evolution dynamics of $G$.
That is, $R$ can be seen as a collection of probabilities of transition from one graph instance to another.
We denote by $k$ the (longest) history length taken into account by the stochastic rules. 
Like before, let $n = |V|$ stand for the number of nodes and $m = |E|$ for the number of possible edges of $G$.
We define a Markov chain $M$ with states of the form $(H, v, V_c)$, where
\begin{itemize}
	\item $H = (H_1, H_2, \ldots, H_k)$, is a $k$-tuple of \emph{temporal graph instances}, that is, for each $i = 1,2, \ldots, k$, $H_i$ is the graph instance present $i-1$ time steps before the current one (which is $H_1$)
	\item $v \in V(G)$ is the current position of the agent
	\item $V_c \subseteq V(G)$ is the set of already covered nodes, i.e., the set of nodes which have been visited at least once by the agent
\end{itemize}

As described earlier for our edge-uniform model, we assume evolution happens in two phases.
First, the new graph instance is determined according to the rule-set $R$.
Second, the new agent position is determined based on a random walk on what's available.
In this respect, consider a state $S = (H, v, V_c)$ and another state $S' = (H', v', V'_c)$ of the described Markov chain $M$.
Let $\Pr[S \rightarrow S']$ denote the transition probability from $S$ to $S'$.
We seek to express this probability as a product of the probabilities for the two phases of evolution.
The latter is possible, since, in our model, the random walk strategy is independent of the graph evolution.

For the graph dynamics, let $\Pr[H \xrightarrow{R} H']$ stand for the probability to move from a history-tuple $H$ to another history-tuple $H'$ under the rules of evolution in $R$.
Note that, for $i = 1, 2, \ldots, k-1$, it must hold $H'_{i+1} = H_{i}$ in order to properly maintain history, otherwise the probability becomes zero.
On the other hand, for valid transitions, the probability reduces to $\Pr[H'_1|(H_1, H_2, \ldots, H_k)]$, which is exactly the probability that $H'_1$ becomes the new instance given the history $H = (H_1, H_2, \ldots, H_k)$ of past instances (and any such probability is either given directly or implied by $R$).

For the second phase, i.e., the random walk on what's available, we denote by $\Pr[v \xrightarrow{H_j} v']$ the probability of moving from $v$ to $v'$ on some graph instance $H_j$.
Since, the random walk strategy is only based on the current instance, we can derive a general expression for this probability, which is independent of the graph dynamics $R$.
Below, let $N_{H_j}(v)$ stand for the set of neighbors of $v$ in graph instance $H_j$.
If $\{v, v'\} \not\in E(G)$, that is, if there is no possible edge between $v$ and $v'$, then for any temporal graph instance $H_j$, it holds $\Pr[v \xrightarrow{H_j} v'] = 0$.
The probability is also zero for all graph instances $H_j$ where the possible edge is not alive, i.e., $\{v, v'\} \not\in E(H_j)$.
In contrast, if $\{v, v'\} \in E(H_j)$, then $\Pr[v \xrightarrow{H_j} v'] = |N_{H_j}(v)|^{-1}$, since the agent chooses a destination uniformly at random out of the currently alive ones.
Finally, if $v = v'$, then the agent remains still, with probability $1$, only if there exist no alive incident edges. We summarize the above facts in the following equation:

\begin{equation}\label{eq:rwa}
	\Pr[v \xrightarrow{H_j} v'] = 
	\left\{
	\begin{array}{l l}
		1 &  \text{, if } N_{H_j}(v) = \emptyset \text{ and } v' = v\\[2pt]
		|N_{H_j}(v)|^{-1} &  \text{, if } v' \in N_{H_j}(v)\\[2pt]
		0 & \text{, otherwise}
	\end{array}
	\right.
\end{equation}

Overall, we combine the two phases in $M$ and introduce the following transition probabilities.
\begin{itemize}
	\item If $|V_c| < n$:
	$$
	\resizebox{\linewidth}{!}{$
		\Pr[(H, v, V_c)\rightarrow(H', v', V'_c)] = 
		\left\{
		\begin{array}{l l}
		\Pr[H \xrightarrow{R} H']\cdot \Pr[v \xrightarrow{H'_1} v'] & \text{, if } v'\in V'_c \text{ and } V'_c = V_c \\[2pt]
		\Pr[H \xrightarrow{R} H']\cdot \Pr[v \xrightarrow{H'_1} v'] & \text{, if } v'\neq v, v'\not\in V'_c \text{ and } V'_c = V_c \cup \{v'\} \\[2pt]
		0 &, \text{ otherwise }
		\end{array}
		\right.
		$}
	$$
	\item If $|V_c| = n$:
	$$
	\Pr[(H, v, V_c)\rightarrow(H', v', V'_c)] =
	\left\{
	\begin{array}{l l}
	1 & \text{, if } H = H', v = v', V_c = V'_c \\[2pt]
	0 & \text{, otherwise}
	\end{array}
	\right.
	$$
\end{itemize}

For $|V_c| < n$, notice that only two cases may have a non-zero probability with respect to the growth of $V_c$.
If the newly visited node $v'$ is already covered, then $V'_c$ must be identical to $V_c$ since no new nodes are covered during this transition.
Further, if a new node $v'$ is not yet covered, then $V'_c$ is updated to include it as well as all the covered nodes in $V_c$.

For $|V_c| = n$, the idea is that once such a state has been reached, and so all nodes are covered, then there is no need for further exploration.
Therefore, such a state can be made \emph{absorbing}.
In this respect, let us denote the set of these states as $\Gamma = \{(H, v, V_c) \in M \;:\; |V_c| = n\}$.

\begin{Definition}
	Let $\emph{ECT}(G, R)$ be the problem of determining the exact value of the cover time for an \emph{RWA} on a graph $G$ stochastically evolving under rule-set $R$.	
\end{Definition}

\begin{Theorem}\label{thm:exact-general}
	Assume all probabilities of the form $\Pr[H \xrightarrow{R} H']$ used in $M$ are exact reals and known a priori. 
	Then, for any underlying graph $G$ and stochastic rule-set $R$, it holds that $\emph{ECT}(G, R) \in \emph{EXPTIME}$.
\end{Theorem}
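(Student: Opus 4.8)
The plan is to recognize the cover time as an expected hitting time in the absorbing Markov chain $M$ and then bound the cost of computing that quantity by the size of $M$. First I would fix the starting configuration: given the initial history of instances and the starting node $v_0$, the relevant initial state is $(H^{(0)}, v_0, \{v_0\})$, and by the construction of the transition probabilities the \emph{RWA} covers $G$ precisely when $M$ enters some state of $\Gamma$ (the states with $|V_c| = n$), which are absorbing. Hence $\emph{ECT}(G, R)$ equals the expected number of steps for $M$ to reach $\Gamma$ from this initial state. A cheap preliminary step is a reachability analysis on the transition graph of $M$ (polynomial in $|M|$): if $\Gamma$ is not reached almost surely the cover time is infinite, and otherwise we restrict attention to the transient states from which $\Gamma$ is reachable, on which absorption is certain and the expected hitting time is finite. (Under \emph{EUE} with a connected $G$ this finiteness is already guaranteed, as noted in Section~\ref{sec:model}.)

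Next I would bound $|M|$. Each temporal graph instance is a spanning subgraph of $G$, i.e.\ a subset of the $m$ possible edges, so there are at most $2^m$ instances; a history $k$-tuple $H$ therefore ranges over at most $2^{mk}$ values, the position $v$ over $n$ values, and the covered set $V_c$ over at most $2^n$ values. Thus $|M| \le n \cdot 2^{mk + n}$, which is exponential in the size of the input $(G, R, k)$. Assembling any single entry $\Pr[S \to S']$ via the product formula together with Equation~\eqref{eq:rwa} costs only polynomially many arithmetic operations, so the full transition matrix of $M$ is writable in time polynomial in $|M|$.

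Finally, writing $Q$ for the substochastic matrix of $M$ restricted to the transient states $M \setminus \Gamma$ and $\mathbf{h}$ for the vector of expected hitting times to $\Gamma$, a standard first-step (harmonic) analysis yields the linear system $(I - Q)\mathbf{h} = \mathbf{1}$, whose coordinate at the initial state is the value we seek. Because absorption is almost sure on the restricted state set, $I - Q$ is nonsingular, so the system has a unique solution obtainable by Gaussian elimination in $O(|M|^3)$ elementary operations; since $|M|$ is exponential in the input, this is $2^{O(\mathrm{poly})}$ operations overall, giving $\emph{ECT}(G, R) \in \emph{EXPTIME}$. The step I expect to be the main obstacle is complexity-model hygiene rather than any mathematical difficulty: to charge the linear-algebra cost honestly one must control the bit-size of the intermediate numbers produced during elimination, and this is exactly why the statement assumes the probabilities $\Pr[H \xrightarrow{R} H']$ are exact reals known a priori — this hypothesis lets us treat each exact arithmetic operation as unit cost and bound the running time purely by the operation count, which is polynomial in the exponentially-many states of $M$.
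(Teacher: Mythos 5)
Your proposal is correct and takes essentially the same route as the paper's proof: build the exponentially-large chain $M$, bound its state count by $\mathcal{O}(2^{km+n}n)$, and compute the expected hitting times to $\Gamma$ via the first-step linear system solved with exact arithmetic in time exponential in $n$, $m$, $k$. The only cosmetic differences are that the paper defines the answer as $\max_{s \in S} H_{s,\Gamma}$ over all starting states $s = (H, v, \{v\})$ rather than fixing one initial configuration, and your preliminary reachability/nonsingularity check is a mild extra precaution that the paper leaves implicit (citing Theorem 1.3.5 of Norris for the validity of the system).
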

\begin{proof}
	For each temporal graph instance, $H_i$, in the worst case, there exist $2^m$ possibilities, since each of the $m$ possible edges is either alive or dead at a graph instance.
	For the whole history $H$, the number of possibilities becomes $(2^m)^k = 2^{k\cdot m}$ by taking the product of $k$ such terms.
	There are $n$ possibilities for the walker's position $v$.
	Finally, for each $v \in V(G)$, we only allow states such that $v \in V_c$.
	Therefore, since we fix $v$, there are up to $n-1$ nodes to be included or not in $V_c$ leading to a total of $\mathcal{O}(2^{n-1})$ possibilities for $V_c$. 
	Taking everything into account, $M$ has a total of $\mathcal{O}(2^{k\cdot m + n-1}n)$ states.
	
	Let $H_{s, \Gamma}$ stand for the hitting time of $\Gamma$ when starting from a state $s \in M$.
	Assuming exact real arithmetic, we can compute all such hitting times by solving the following system (Theorem 1.3.5 \cite{Norris}):
	$$
	\left\{
	\begin{array}{l l}
	H_{s, \Gamma} = 0 & , \forall s \in \Gamma\\[2pt]
	H_{s, \Gamma} = 1 + \sum_{s' \not\in \Gamma} \Pr[s \rightarrow s']\cdot H_{s', \Gamma} & , \forall s \not\in \Gamma
	\end{array}	
	\right.
	$$
	
	Let $C$ stand for the cover time of an \emph{RWA} on $G$ evolving under $R$.
	By definition, the cover time is the expected time till all nodes are covered, regardless of the position of the walker at that time.
	Consider the set $S = \{(H, v, \{v\}) \in M : v \in V(G)\}$ of start positions for the agent as depicted in $M$.
	Then, it follows $C = \max_{s \in S} H_{s, \Gamma}$, where we take the worst-case hitting time to a state in $\Gamma$ over any starting position of the agent.
	In terms of time complexity, computing $C$ requires computing all values $H_{s, \Gamma}$, for every $s \in S$.
	To do so, one must solve the above linear system of size $\mathcal{O}(2^{k\cdot m + n-1}n)$, which can be done in time exponential to input parameters $n, m$ and $k$.
\end{proof}

It's noteworthy to remark that this approach is general in the sense that there are no assumptions on the graph evolution rule-set $R$ besides it being stochastic, i.e., describing the probability of transition from each graph instance to another given some history of length $k$.
In this regard, Theorem~\ref{thm:exact-general} captures both the case of Markovian Evolving Graphs \cite{Avin} and the case of Edge-Uniform Graphs considered in this paper.
We now proceed and show how the aforementioned general approach applies to the zero-step and one-step history cases of Edge-Uniform Graphs.
To do so, we calculate the corresponding graph-dynamics probabilities.
The random walk probabilities are given in Equation~\ref{eq:rwa}.

\paragraph{RWA on Edge-Uniform Graphs (Zero-Step History).}

Based on the general model, we rewrite the transition probabilities for the special case when \emph{RWA} takes place on an edge-uniform graph without taking into account any memory, i.e., the same case as in Section~\ref{sec:history0}. 
Notice that, since past instances are not considered in this case, the history-tuple reduces to a single graph instance $H$.
We rewrite the transition probabilities, for the case $|V_c| < n$, as follows:
$$
\resizebox{\linewidth}{!}{$
	\Pr[(H, v, V_c)\rightarrow(H, v', V'_c)] = 
	\left\{
	\begin{array}{l l}
	\Pr[H'|H] \cdot \Pr[v \xrightarrow{H'} v'] & \text{, if }  v' \in V'_c \text{ and } V'_c = V_c \\[2pt]
	\Pr[H'|H] \cdot \Pr[v \xrightarrow{H'} v'] & \text{, if }  v'\neq v, v' \not\in V'_c \text{ and } V'_c = V_c \cup \{v'\} \\[2pt]
	0 &, \text{ otherwise }
	\end{array}
	\right.
	$}
$$

Let $\alpha$ stand for the number of edges alive in $H'$.
Since there is no dependence on history and each edge appears independently with probability $p$, we get $\Pr[H'|H] = \Pr[H'] = p^{\alpha} \cdot (1-p)^{m - \alpha}$.

\paragraph{RWA on Edge-Uniform Graphs (One-Step History).}

We hereby rewrite the transition probabilities for a Markov chain capturing an \emph{RWA} taking place on an edge-uniform graph where, at each time step, the current graph instance is taken into account to generate the next one.
This case is related to the results in Section~\ref{sec:history1}.
Due to the history inclusion, the transition probabilities become more involved than those seen for the zero-history case.
Again, we consider the non-absorbing states, where $|V_c| < n$.

$$
\resizebox{\linewidth}{!}{$
	\Pr[((H_1, H_2), v, V_c)\rightarrow((H'_1, H'_2), v', V'_c)] = 
	\left\{
	\begin{array}{l l}
	\Pr[(H_1, H_2) \rightarrow (H'_1, H'_2)] \cdot \Pr[v \xrightarrow{H'_1} v'] & \text{, if }  v' \in V'_c \text{ and } V'_c = V_c \\[2pt]
	\Pr[(H_1, H_2) \rightarrow (H'_1, H'_2)] \cdot \Pr[v \xrightarrow{H'_1} v'] & \text{, if }  v' \not\in V'_c \text{ and } V'_c = V_c \cup \{v'\} \\[2pt]
	0 &, \text{ otherwise }
	\end{array}
	\right.
	$}
$$

If $H'_2 \neq H_1$, i.e., if it does not hold that, for each $e \in G$, $e \in H'_2$ if and only if $e \in H_1$, then $\Pr[(H_1, H_2) \rightarrow (H'_1, H'_2)] = 0$, otherwise the history is not properly maintained.
On the other hand, if $H'_2 = H_1$, then $\Pr[(H_1, H_2) \rightarrow (H'_1, H'_2)] = \Pr[(H_1, H_2) \rightarrow (H'_1, H_1)] = \Pr[H'_1|H_1]$.
To derive an expression for the latter, we need to consider all edge (mis)matches between $H'_1$ and $H_1$, and properly apply the Birth-Death rule (Table~\ref{tab:pq}).
Below, we denote by $D(H) = E(G) \setminus E(H)$ the set of possible edges of $G$, which are dead at instance $H$.
Let $c_{00} = |D(H_1) \cap D(H'_1)|$, $c_{01} = |D(H_1) \cap E(H'_1)|$, $c_{10} = |E(H_1) \cap D(H'_1)|$ and $c_{11} = |E(H_1) \cap E(H'_1)|$.
Each of the $c_{00}$ edges was dead in $H_1$ and remained dead in $H'_1$, with probability $1-p$.
Similarly, each of the $c_{01}$ edges was dead in $H_1$ and became alive in $H'_1$, with probability $p$.
Also, each of the $c_{10}$ edges was alive in $H_1$ and died in $H'_1$, with probability $q$.
Finally, each of the $c_{11}$ edges was alive in $H_1$ and remained alive in $H'_1$, with probability $1-q$.
Overall, due to the edge-independence of the model, we get $\Pr[H'_1|H_1] = (1-p)^{c_{00}} \cdot p ^{c_{01}} \cdot q^{{c_{10}}} \cdot (1-q)^{{c_{11}}}$.

\section{Experimental Results}\label{sec:exp}

In this section, we discuss some experimental results to complement our previously-established theoretical bounds.
We simulate an \emph{RWA} taking place in graphs evolving under the zero-step history model.
We provide an experimental estimation of the value of the cover time for such a walk.
To do so, for each specific graph and value of $p$ considered, we repeat the experiment a large number of times, e.g., at least $1000$ times. 
In the first experiment, we start from a graph instance with no alive edges.
At each step, after the graph evolves, the walker picks uniformly at random an incident alive edge to traverse.
The process continues till all nodes are visited at least once.
Each next experiment commences with the last graph instance of the previous experiment as its first instance.

We construct underlying graphs in the following fashion: given a natural number $n$, we initially construct a path on $n$ nodes, namely $v_1, v_2, \ldots, v_n$.
Afterward, for each two distinct nodes $v_i$ and $v_j$, we add an edge $\{v_i, v_j\}$ with probability equal to a \emph{randomThreshold} parameter.
For instance, $randomThreshold = 0$ means the graph remains a path.
On the other hand, for $randomThreshold = 1$, the graph becomes a clique. 

In Tables~\ref{tab:ran0.85}, \ref{tab:ran0.5} and \ref{tab:ran0.15}, we display the average cover time, rounding it to the nearest natural number, computed in some indicative experiments for $randomThreshold$ equal to $0.85$, $0.5$ and $0.15$ respectively.   
Consequently, we provide estimates for a lower and an upper bound on the temporal cover time.
In this respect, we experimentally compute a value for the cover time of a simple random walk in the underlying graph, i.e., the \emph{static} cover time.
Then, we plug in this value in place of $C_G$ to apply the bounds given in Theorem~\ref{thm:RWA}.
Overall, the temporal cover times computed appear to be within their corresponding lower and upper bounds.

\begin{table}
	\caption{Experimental Results for Randomly-Produced Graphs (\emph{randomThreshold} $=$ $0.85$)}
	\label{tab:ran0.85}
	\centering
	\resizebox{0.91\linewidth}{!}{
	\begin{tabular}{| c | c | c  | c|| c | c | c | c |}
		\hline
		Size & $\delta$ &$\Delta$ & $p$ & Static Cover Time & Temporal Cover Time & Lower Bound & Upper Bound
		\\\hline\hline
		10 & 6 & 9 & 0.9 & 28 & 28 & 28 & 28 \\ \hline
		10 & 7 & 9 & 0.5 & 28 & 28 & 28 & 28 \\ \hline
		10 & 7 & 9 & 0.2 & 27 & 31 & 31 & 34 \\ \hline
		10 & 7 & 9 & 0.1 & 29 & 50 & 47 & 61 \\ \hline
		10 & 7 & 9 & 0.05 & 28 & 78 & 76 & 93 \\ \hline
		10 & 7 & 8 & 0.01 & 28 & 356 & 83 & 413 \\ 
		\hline\hline
		100 & 74 & 92 & 0.9 & 535 & 535 & 535 & 535 \\ \hline
		100 & 74 & 91 & 0.05 & 530 & 543 & 535 & 543 \\ \hline
		100 & 76 & 92 & 0.01 & 536 & 912 & 888 & 1003 \\ \hline
		100 & 74 & 92 & 0.005 & 541 & 1476 & 1465 & 1746 \\ 
		\hline\hline
		250 & 197 & 229 & 0.99 & 1551 & 1551 & 1551 & 1551\\ \hline
		250 & 194 & 228 & 0.75 & 1555 & 1555 & 1555 & 1555\\ \hline
		250 & 192  & 225  & 0.01 & 1548 & 1744 & 1728 & 1810\\ \hline
		250 & 201 & 228 & 0.005 & 1538 & 2326 & 2259 & 2423\\ \hline
		250 & 198 & 225 & 0.001 & 1546 & 7948 & 7670 & 8603\\ \hline
	\end{tabular}
	}
\end{table}

\begin{table}
	\caption{Experimental Results for Randomly-Produced Graphs (\emph{randomThreshold} $=$ $0.5$)}
	\label{tab:ran0.5}
	\centering
	\resizebox{0.91\linewidth}{!}{
	\begin{tabular}{| c | c | c  | c|| c | c | c | c |}
		\hline
		Size & $\delta$ &$\Delta$ & $p$ & Static Cover Time & Temporal Cover Time & Lower Bound & Upper Bound \\\hline\hline
		10 & 3 & 6 & 0.9  & 35 & 35 & 35 & 35 \\ \hline
		10 & 3 & 7 & 0.5  & 33 & 35 & 34 & 38  \\ \hline
		10 & 5 & 8 & 0.2  & 28 & 37 & 33 & 41 \\ \hline
		10 & 4 & 8 & 0.1  & 34 & 69 & 60 & 100  \\ \hline
		10 & 3 & 8 & 0.05  & 32 & 118 & 96 & 226  \\ \hline
		10 & 3 & 7 & 0.01  & 33 & 780 & 486 & 1113  \\ 
		\hline\hline
		100 & 39 & 60 & 0.9  & 542 & 542 & 542 & 542 \\ \hline
		100 & 37 & 68 & 0.1  & 561 & 571 & 561 & 572 \\ \hline
		100 & 35 & 63 & 0.05  & 556 & 589 & 579 & 667 \\ \hline
		100 & 38 & 63 & 0.01  & 544 & 1349 & 1160 & 1714 \\ \hline
		100 & 35 & 61 & 0.005  & 549 & 2436 & 2085 & 3413 \\ \hline
		\hline
		250 & 106 & 144 & 0.9 & 1589 & 1589 & 1589 & 1589\\ \hline
		250 & 105 & 145 & 0.025 & 1581 & 1646 & 1623 & 1700\\ \hline
		250 & 109 & 147 & 0.01 & 1579 & 2150 & 2046 & 2372\\ \hline
		250 &  105& 150  & 0.005 & 1584  & 3324 & 2998 & 3871 \\ \hline
	\end{tabular}
	}
\end{table}

\begin{table}
	\caption{Experimental Results for Randomly-Produced Graphs (\emph{randomThreshold} $=$ $0.15$)}
	\label{tab:ran0.15}
	\centering
	\resizebox{0.91\linewidth}{!}{
	\begin{tabular}{| c | c | c  | c|| c | c | c | c |}
		\hline
		Size & $\delta$ &$\Delta$ & $p$ & Static Cover Time & Temporal Cover Time & Lower Bound & Upper Bound \\\hline\hline
		10 & 2 & 5 & 0.9  & 38 & 38 & 38 & 38 \\ \hline
		10 & 1 & 5 & 0.5  & 62 & 70 & 64 & 125  \\ \hline
		10 & 2 & 4 & 0.2  & 41 & 88 & 69 & 113 \\ \hline
		10 & 2 & 5 & 0.1  & 48 & 176 & 117 & 252  \\ \hline
		10 & 1 & 5 & 0.05  & 46 & 361 & 203 & 919  \\ \hline
		10 & 2 & 4 & 0.01  & 38 & 1356 & 959 & 1899  \\ 
		\hline\hline
		100 & 9 & 28 & 0.9  & 671 & 671 & 671 & 671 \\ \hline
		100 & 8 & 24 & 0.1  & 634 & 740 & 689 & 1113 \\ \hline
		100 & 11 & 25 & 0.05  & 616 & 1033 & 852 & 1428 \\ \hline
		100 & 9 & 24 & 0.01  & 694 & 4152 & 3240 & 8028 \\ \hline
		100 & 10 & 23 & 0.005  & 642 & 7873 & 5894 & 13127 \\ \hline
		\hline
		250 & 25 & 57 & 0.9 & 1708 & 1708 & 1708 & 1708\\ \hline
		250 & 27 & 59 & 0.1 & 1700 & 1739 & 1700 & 1803\\ \hline
		250 & 23 & 54 & 0.01 & 1750 & 5167 & 4179 & 8480\\ \hline
		250 & 23 & 54 & 0.005 & 1736 & 9601 & 7321 & 15944\\ \hline
	\end{tabular}
	}
\end{table}

\section{Conclusions}\label{sec:con}
We defined the general \emph{Edge-Uniform Evolution} model for a stochastically-evolving graph, where a single stochastic rule is applied, but to each edge independently, and provided lower and upper bounds for the cover time of two random walks taking place on such a graph (cases $k = 0, 1$).
Moreover, we provided a general framework to compute the exact cover time of a broad family of stochastically-evolving graphs in exponential time.

An immediate open question is to obtain a good lower/upper bound for the cover time of \emph{RWA} in the Birth-Death model.
In this case, the problem becomes quite more complex than the $k = 0$ case.
Depending on the values of $p$ and $q$, the walk may be heavily biased, positively or negatively, toward possible edges incident to the walker's position, which were used in the recent past.

\paragraph{Acknowledgments.}
We would like to acknowledge two anonymous reviewers for spotting technical errors in the previously attempted analysis of the one-step history \emph{RWA}.
Also, we acknowledge another anonymous reviewer, who suggested using Theorem~\ref{thm:RWDAleliunas} as an alternative to electrical network theory and some other useful modifications.

\end{document}